%&latex
\documentclass{amsart}
\usepackage{euscript}
\usepackage{pstricks}

\usepackage{graphics}
\usepackage{epsfig}
\usepackage{boxedminipage}
\usepackage{url}

%\usepackage{showkeys}

%%%New commands%%%

\newcommand{\F}{\mathbb F}
\newcommand{\Z}{\mathbb Z}
\newcommand{\Q}{\mathbb Q}
\newcommand{\N}{\mathbb N}

\newcommand{\Cal}{\EuScript}

\newcommand{\md}{\mathbin{\mathsf{mod}}}

%%%Modified commands%%%
\renewcommand{\:}{\colon}
\renewcommand{\>}{\rightarrow}

\renewcommand{\labelenumi}{\rm(\theenumi)}
\renewcommand{\theenumi}{\roman{enumi}}

%%Theorem-like environment%%
\newtheorem{thm}{Theorem}[section] %%%O-o-ops!!!
\newtheorem{prop}[thm]{Proposition}

\theoremstyle{definition}

\theoremstyle{remark}

\theoremstyle{plain}

\newtheorem*{OpQu*}{Open question}
\theoremstyle{definition}

\newtheorem*{defn*}{Definition}
\theoremstyle{remark}
\newtheorem{note}[thm]{Note}
\newtheorem*{note*}{Note}

\newtheorem*{exmp*}{Example}
\newtheorem*{exmps*}{Examples}

\begin{document}

%+Title
\title[Automata finiteness criterion]{Automata finiteness criterion in terms of van der Put series of  automata
functions}
\author{Vladimir Anashin}
\thanks{Supported in
parts by Russian Foundation for Basic Research grant No 09-01-00653-a and by Chinese Academy of Sciences visiting professorship for senior international
scientists grant No  2009G2-11}
\subjclass[2000]{Primary 11E95; Secondary 11B85, 68Q70}

\keywords{Automaton, finiteness conditions, $p$-adic numbers, van der Put
series,  automata sequences}
\address{
Institute for Information Security\\
Moscow State University\\
Leninskie Gory, 1\\
119991 Moscow\\
Russia}
\email{anashin@iisi.msu.ru; vladimir.anashin@u-picardie.fr}

\dedicatory{To Professor Igor Volovich on the occasion of his 65-th birthday}
%\date{\today}
\maketitle
%-Title

%+Abstract
\begin{abstract}
In the paper we develop the $p$-adic theory of discrete automata. Every automaton
$\mathfrak A$
(transducer) whose input/output alphabets consist of $p$ symbols can be associated
to a continuous (in fact, 1-Lipschitz) map from $p$-adic integers to $p$ integers, the automaton function $f_\mathfrak A$. The $p$-adic theory (in
particular, the $p$-adic ergodic theory) turned out to be very efficient
in a study of properties of automata expressed via properties of automata
functions. In the paper we prove a criterion for finiteness of the number
of states of automaton in terms of van der Put series of the automaton function.
The criterion displays  connections between $p$-adic analysis
and the theory of automata sequences.
\end{abstract}

\section{Introduction}
\label{sec:Intro}
The $p$-adic numbers, which appeared more than a century ago in Kurt Hensel's works as a pure mathematical construction, see e.g. \cite{Hensel}, at the
end of XX century were recognized as a base for adequate descriptions of physical, biological, cognitive and information processing phenomena. The pioneer papers in these studies were works of Vladimirov and Volovich \cite{VL}, \cite{VL1}, \cite{Volovich:1987} followed by 
monograph \cite{Vladimirov/Volovich/Zelenov:1994}. Although the papers (and
the monograph) are focused on application of the $p$-adic theory to mathematical
physics, the impact of these works was much wider than physical models only:
Inspired by these works, many scientists started applying $p$-adic methods to
their own areas of research. Now the $p$-adic theory, and wider, ultrametric
analysis and ultrametric dynamics, is a rapidly developing area that finds
applications to various sciences (physics, biology, genetics, cognitive sciences,
information sciences, computer science, cryptology, numerical methods, etc.). On the contemporary
state-of-the art, the interested
reader is referred to the monograph \cite{AnKhr} and references therein.
The current paper concerns application of $p$-adic methods to automata theory,
the both mathematical and applied science which has numerous applications in
engineering sciences, linguistics, computer science, etc. The paper displays
tight connections between the $p$-adic theory and the theory of automata sequences. Note also that automata may be regarded as a mathematical model
of the `causality law', \cite{Vuillem_DigNum}; so the present paper may have some relations to physics as well.

By the definition, the (non-initial) automaton is 5-tuple $\mathfrak A=\langle\Cal I,\Cal S,\Cal O,S,O\rangle$
where $\mathcal I$ is a finite set, the \emph{input alphabet}; $\Cal O$  is a finite set, the \emph{output alphabet}; $\Cal S$ is a non-empty (possibly, infinite) set of \emph{states}; 
$S\colon\Cal I\times\Cal S\to \Cal S$ is  a \emph{state transition function}; $O\colon\Cal I\times\Cal S\to \Cal O$ is an \emph{output function}.  The automaton
where both input alphabet $\Cal I$ and output alphabet $\Cal O$ are non-empty
is called the \emph{transducer}, see e.g.
\cite{Allouche-Shall}; the automaton where the input alphabet is empty whereas
the output alphabet is not empty is
called the \emph{generator}.
The \emph{initial automaton} $\mathfrak A(s_0)=\langle\Cal I,\Cal S,\Cal O,S,O, s_0\rangle$ is an automaton
$\mathfrak A$ where one state $s_0\in\Cal S$ is fixed; it is called the \emph{initial
state}.  We
stress that the definition of  the initial automaton $\mathfrak A(s_0)$ is nearly the same as the one of
\emph{Mealy automaton} (see e.g. \cite{Bra})
with the only important difference: 
the set of states
$\Cal S$ of $\mathfrak A(s_0)$ is \emph{not necessarily finite}.

Given a non-empty alphabet $\Cal A$, its elements are called \emph{symbols},
or \emph{letters}. By the definition, a \emph{word of length $n$ over alphabet $\Cal A$} is a finite sequence (stretching from
right to left)
$\alpha_{n-1}\cdots\alpha_1\alpha_0$, where $\alpha_{n-1},\ldots,\alpha_1,\alpha_0\in\Cal
A$. The \emph{empty word} is a sequence
of length 0, that is, the one that contains no symbols. Hereinafter the length
of the word $w$ is denoted via $|w|$. Given a word $w=\alpha_{n-1}\cdots\alpha_1\alpha_0$,
any word $v=\alpha_{k-1}\cdots\alpha_1\alpha_0$, $k\le n$, is called a \emph{prefix}
of the word $w$; whereas any word $u=\alpha_{n-1}\cdots\alpha_{i+1}\alpha_i$,
$0\le i\le n-1$ is called a \emph{suffix} of the word $w$. Given words $a=\alpha_{n-1}\cdots\alpha_1\alpha_0$
and $b=\beta_{k-1}\cdots\beta_1\beta_0$, the
\emph{concatenation} $a\circ b$ is the following word (of length $n+k$):
\[
a\circ b=\alpha_{n-1}\cdots\alpha_1\alpha_0\beta_{k-1}\cdots\beta_1\beta_0.
\] 

Given an input word $w=\chi_{n-1}\cdots\chi_1\chi_0$ over the alphabet $\Cal
I$, an initial transducer $\mathfrak A(s_0)=\langle\Cal I,\Cal S,\Cal O,S,O, s_0\rangle$
transforms $w$ to output word $w^\prime=\xi_{n-1}\cdots\xi_1\xi_0$ over the
output alphabet $\Cal O$ as follows (cf. Figure \ref{fig:Transd-sc}): 
Initially the transducer  $\mathfrak A(s_0)$ is at the state
$s_0$; accepting the input symbol $\chi_0\in\Cal I$, the transducer outputs the
symbol $\xi_0=O(\chi_0,s_o)\in\Cal O$ and reaches the state
$s_1=S(\chi_0,s_0)\in\Cal S$; then the transducer accepts the next input symbol
$\chi_1\in\Cal I$, reaches the state
$s_2=S(\chi_1,s_1)\in\Cal S$, outputs $\xi_1=O(\chi_1,s_1)\in\Cal O$, and the routine repeats.
\begin{figure}[h]
\begin{quote}\psset{unit=0.5cm}
 \begin{pspicture}(1,2.5)(24,13)
\pscircle[fillstyle=solid,fillcolor=yellow,linewidth=2pt](12,10){1}
\pscircle[fillstyle=solid,fillcolor=yellow,linewidth=2pt](12,4){1}
  \psline(12,11)(12,12)
  \psline{->}(18,7)(13,7)
  \psline{->}(6,10)(11,10)
  \psline{->}(6,4)(11,4)
  \psline{->}(4,7)(6,7)
  \psline(6,4)(6,10)
  \psline(18,12)(18,7)
  \psline(12,12)(18,12)
  \psline{->}(12,8)(12,9)
  \psline(12,3)(12,2)
  \psline{->}(12,2)(18,2)
  \psline{<-}(12,5)(12,6)
  \psframe[fillstyle=solid,fillcolor=yellow,linewidth=2pt](11,6)(13,8)
  \uput{0}[180](12.4,7){$s_i$}
  \uput{0}[180](3.5,7){$\cdots\chi_{i+1}\chi_i$}
  \uput{0}[90](12,9.7){$S$}
  \uput{1}[90](12,2.7){$O$}
  \uput{1}[0](11.7,11.3){$s_{i+1}=S(\chi_i,s_i)$}
 \uput{1}[0](11.7,12.7){{\sf state transition}}
 \uput{1}[0](0,8){{\sf input}}
  \uput{1}[0](11.5,2.5){$\xi_i=O(\chi_i,s_i)$}
  \uput{1}[0](17.5,2){$\xi_i\xi_{i-1}\cdots\xi_0$}
  \uput{1}[0](18,3){{\sf output}}
 \end{pspicture}
\end{quote}
\caption{Initial transducer, schematically}
\label{fig:Transd-sc}
\end{figure}
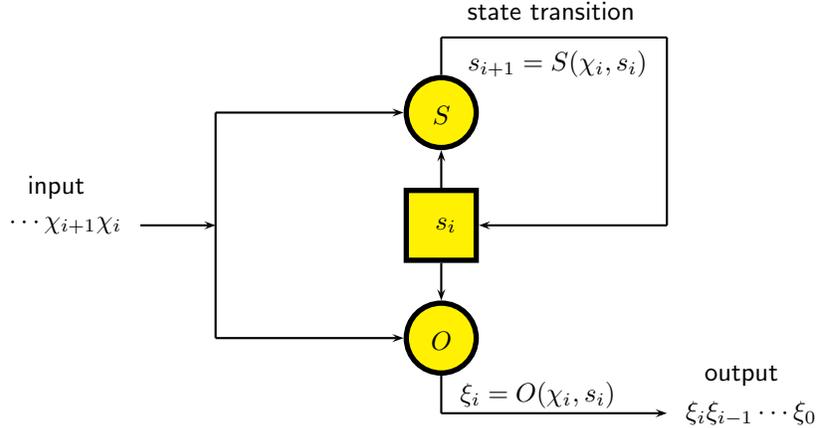
Throughout the paper, `automaton' mostly stands for `initial automaton';
 we make  corresponding remarks if not. Further in the paper we mostly
 consider transducers. Furthermore, throughout the paper  we
 consider only \emph{reachable transducers}; that is, we assume 
that all the states of an initial transducer $\mathfrak A(s_0)$  are 
\emph{reachable}
from $s_0$: 
Given $s\in\Cal S$, there exists
input word $w$ over alphabet $\Cal I$ such that after the word $w$ has been
feeded to the automaton $\mathfrak A(s_0)$, the automaton reaches the state $s$. A reachable transducer is called \emph{finite}\ if its set $\Cal S$ of
states is finite, and is called \emph{infinite} if otherwise.

To  
 the initial automaton
$\mathfrak A(s_0)$ we put into a correspondence a family $\Cal F(\mathfrak
A)$ of all \emph{subautomata} $\mathfrak
A(s)=\langle\Cal I,\tilde{\Cal S},\Cal O,\tilde S,\tilde O, s\rangle$, $s\in\Cal S$, where
$\tilde{\Cal S}=\tilde{\Cal S}(s)\subset\Cal S$ is the set of all states that are reachable from the state
$s$ and $\tilde S,\tilde O$ are respective restrictions of the state transition and
output functions $S, O$ on $\Cal I\times \tilde{\Cal S}$.

Hereinafter in the paper the word `automaton' stands for an initial transducer whose input and output alphabet consists of $p$ symbols. We mostly assume that $p$ is
a prime although many of further results (for instance, the following Theorem
\ref{thm:auto-1L}) are true without this restriction
and 
so we identify input/output symbols with the $p$-element field $\F_p=\{0,1,\ldots,p-1\}$.
 Thus, for every $n=1,2,3,\ldots$ the automaton
$\mathfrak A(s_0)=\langle\F_p,\Cal S,\F_p,S,O,s_0\rangle$ maps $n$-letter words over $\F_p$ to $n$-letter words over
$\F_p$ according to the procedure described above,
cf. Figure \ref{fig:Transd-sc}.

We identify $n$-letter words over $\F_p$
with non-negative integers in a natural way: Given an $n$-letter word 
$w=\chi_{n-1}\chi_{n-2}\cdots\chi_0$ (i.e., $\chi_i\in\F_p$ for $i=0,1,2,\ldots,n-1$),
we consider $w$ as a base-$p$ expansion of the number $\chi_0+\chi_1\cdot
p+\cdots+\chi_{n-1}\cdot p^{n-1}$. In turn, the latter number can be considered
as an element of the residue ring $\Z/p^n\Z$ modulo $p^n$. Thus, to \emph{every
automaton $\mathfrak A$ there corresponds a map $f_{n,\mathfrak A}$ from $\Z/p^n\Z$ to $\Z/p^n\Z$}, for every $n=1,2,3,\ldots$.

Note that when necessary we may also identify $n$-letter words over $\F_p$
with elements of $\F_p^n$, the $n$-th Cartesian power of $\F_p$; so further
we use these one-to-one correspondences between $n$-letter words and residues
modulo $p^n$ (as well as between the words and elements from $\F_p^n$) without
extra comments.

In a similar manner, \emph{every automaton $\mathfrak A=\mathfrak A(s_0)$ defines  a map $f_{\mathfrak A}$ from  $\Z_p$ to $\Z_p$}:
Given an infinite word $w=\ldots\chi_{n-1}\chi_{n-2}\cdots\chi_0$ (that is,
an infinite  sequence) over $\F_p$ we consider
a $p$-adic integer whose $p$-adic canonical expansion is $z=z(w)=\chi_0+\chi_1\cdot
p+\cdots+\chi_{n-1}\cdot p^{n-1}+\cdots$; so, by the definition,  for every $z\in\Z_p$ we put
\begin{equation}
\label{eq:auto-f-def}
\delta_i(f_{\mathfrak A}(z))=O(\delta_i(z),s_i)\qquad  (i=0,1,2,\ldots),
\end{equation}
where $s_i=S(\delta_{i-1}(z),s_{i-1})$, $i=1,2,\ldots$, and $\delta_i(z)$
is the
$i$-th `$p$-adic digit' of $z$; that is,  the $i$-th term coefficient
in the $p$-adic canonical representation of $z$: $\delta_i(z)=\chi_i\in\F_p$,
$i=0,1,2,\ldots$.
The so defined map $f_{\mathfrak A}$ is called the \emph{automaton function}\index{automata function}
(or, the automaton map)
of the automaton $\mathfrak A$. 

The point is that the class of all automata functions that correspond to
automata with $p$-letter input/output alphabets coincides with the
class of all  maps from $\Z_p$ to $\Z_p$ that satisfy $p$-adic Lipschitz condition
with a constant 1 (1-Lipschitz maps, for brevity), see further Theorem \ref{thm:auto-1L}.
The claim of  the theorem is not a completely new result: Actually
the claim can be derived from a more
general result on asynchronous automata \cite[Proposition 3.7]{Grigorch_auto};
also, 
in a special case $p=2$ the claim was proved in  \cite{Vuillem_circ}. Nonetheless,
as we consider only synchronous
automata, however, for arbitrary $p$,
further in Section \ref{sec:auto-1L} we
 present a direct `$p$-adic' proof of Theorem \ref{thm:auto-1L}. Thus, to
 study automata functions one can apply various techniques of $p$-adic analysis.

We note that the $p$-adic approach (and wider the non-Archimedean one) has already been successfully applied to automata
theory. Seemingly the paper  \cite{Lunts} is the first one where  the $p$-adic techniques is applied to
study automata functions; the paper  deals with linearity
conditions of automata maps. For application of the non-Archimedean methods to automata
and formal
languages  see expository paper \cite{Pin_p-adic_auto} and references therein;
for applications to automata and group theory see \cite{Grigorch_auto,Grigorch_auto2_eng}.
In \cite{Vuillem_circ,Vuillem_DigNum,Vuillem_fin} the 2-adic methods are
used to study binary automata (the ones whose input/output alphabet is $\F_2$), in particular, to obtain the finiteness criterion for these automata. In monograph \cite{AnKhr} the $p$-adic ergodic theory
is studied (see numerous references therein) aiming at applications to computer
science  and cryptography (in particular, to automata theory, to pseudorandom
number generation and to stream cipher design) as well as to applications
in other areas like quantum theory, cognitive sciences and genetics.

The central result of the paper is the finiteness criterion for automata
in terms van der Put series of automata function, Theorem \ref{thm:fin-auto}. Compared to the finiteness
criterion from \cite{Vuillem_fin}, our's criterion 
\begin{itemize}
\item is more general ($p$ is arbitrary prime, and not only $p=2$ as in \cite{Vuillem_fin});
\item is more convenient, as the criterion from \cite{Vuillem_fin} is
in terms of values of functions $\delta_i(f_\mathfrak A)$, $i=0,1,2,\ldots$,
 which are difficult
(or even impossible)
to obtain from standard  $p$-adic representations of functions; whereas our's
directly uses one of these standard representations, the van der Put series.
\end{itemize}

The paper is organized as follows:
\begin{itemize}
\item In Section \ref{sec:auto-1L} we show that the class of all automata functions of automata with $p$-letter input/output alphabets 
coincides with the class of all 1-Lipschitz functions  defined on $\Z_p$
and valuated in $\Z_p$.
\item In Section \ref{sec:vdp} we remind basic facts about van der Put series;
for instance, the criterion if a function is 1-Lipschitz, in terms of its
van
der Put series.
\item In Section \ref{sec:main} we prove main result of the paper, the automaton
finiteness criterion in terms of van der Put series of the automaton function.
\item We conclude in Section \ref{sec:concl}.
\end{itemize}

\section{Automata functions = 1-Lipschitz functions}
\label{sec:auto-1L}
In the section, we prove that automata functions constitutes
the class of all 1-Lipschitz functions. We first remind a characterization
of 1-Lipschitz functions $f\:\Z_p\>\Z_p$ via properties of \emph{coordinate functions}; the latter are functions $\delta _{i}(f(x))$ defined on
$\Z_p$ and valuated in $\F_p=\{0,1,\ldots,p-1\}$. The $i$-th coordinate function
is merely a value of coefficient of the $i$-th term in a canonical $p$-adic expansion of $f(x)$.
\begin{prop}[cf. \protect{\cite[Proposition 3.35]{AnKhr}}]
\label{prop:compat_coord}
A function $f\colon{\mathbb Z}_{p}\rightarrow {\mathbb Z}_{p}$ is 1-Lipschitz if and only if for
every $i=1,2,\ldots  $ the $i$-th coordinate function $\delta _{i}(f(x))$ does not depend on
$\delta _{i+k}(x)$, for all $k=1,2,\ldots  $ .
\end{prop}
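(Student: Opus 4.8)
The plan is to strip the $1$-Lipschitz condition down to a statement about residues modulo powers of $p$ and then read off what it says about individual $p$-adic digits. First I would recall the basic dictionary on $\Z_p$: for $x,y\in\Z_p$ one has $|x-y|_p\le p^{-n}$ if and only if $x\equiv y\pmod{p^n}$, and $x\equiv y\pmod{p^n}$ holds exactly when $\delta_j(x)=\delta_j(y)$ for all $j=0,1,\dots,n-1$ (and the same applies with $f(x),f(y)$ in place of $x,y$). Consequently $f$ is $1$-Lipschitz if and only if, for every $n\ge 1$ and all $x,y\in\Z_p$, the coincidence of the digits $\delta_0,\dots,\delta_{n-1}$ on $x$ and on $y$ forces the coincidence of $\delta_0,\dots,\delta_{n-1}$ on $f(x)$ and on $f(y)$; equivalently, $f$ is $1$-Lipschitz iff for each $n\ge 1$ it induces a well-defined map $\Z/p^n\Z\to\Z/p^n\Z$ on residues modulo $p^n$.

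Next I would pin down the meaning of the phrase in the statement: ``$\delta_i(f(x))$ does not depend on $\delta_{i+k}(x)$ for all $k\ge 1$'' is to be read in the strong sense that the coordinate function $\delta_i(f(\,\cdot\,))\colon\Z_p\to\F_p$ factors through the projection $x\mapsto(\delta_0(x),\dots,\delta_i(x))$, i.e.\ through the residue $x\bmod p^{i+1}$. Granting this, both implications drop out of the reformulation above. For the forward implication: if $f$ is $1$-Lipschitz and $x\equiv y\pmod{p^{i+1}}$, then $f(x)\equiv f(y)\pmod{p^{i+1}}$, hence $\delta_i(f(x))=\delta_i(f(y))$, so $\delta_i(f(\,\cdot\,))$ is a function of $\delta_0(x),\dots,\delta_i(x)$ only. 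For the converse: if each $\delta_i(f(\,\cdot\,))$ depends only on $\delta_0(x),\dots,\delta_i(x)$, then $x\equiv y\pmod{p^n}$ yields $\delta_i(f(x))=\delta_i(f(y))$ for every $i\le n-1$, so $f(x)\equiv f(y)\pmod{p^n}$; for $x\ne y$, taking $n\ge 0$ to be the largest integer with $x\equiv y\pmod{p^n}$ (so $|x-y|_p=p^{-n}$) gives $|f(x)-f(y)|_p\le p^{-n}=|x-y|_p$, and the case $x=y$ is trivial. The same argument works verbatim for $i=0$, so one in fact obtains the statement for all $i=0,1,2,\dots$; note that the $i=0$ instance — that $\delta_0(f(x))$ depends only on $\delta_0(x)$, which is the $n=1$ case of the reformulation — is precisely what is used to run the converse.

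I do not expect a genuine obstacle here: the proposition is essentially \cite[Proposition~3.35]{AnKhr}, and the proof is nothing more than the translation just sketched. The only points deserving a moment of care are bookkeeping ones: the off-by-one between the exponent $n$ with $|x-y|_p=p^{-n}$ and the range $0,\dots,n-1$ of agreeing digit indices, and the insistence on the strong reading of ``does not depend on'' — a map $\Z_p\to\F_p$ can fail to depend on any single digit and yet not be a function of finitely many digits, which would make the weak reading of the converse false, so one must phrase the coordinate condition as a genuine factorization through $x\bmod p^{i+1}$.
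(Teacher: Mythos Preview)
Your argument is correct and is exactly the standard translation between the metric $1$-Lipschitz condition and the digit-by-digit ``triangular'' form. Note, however, that the paper does not supply its own proof of this proposition: it is quoted verbatim from \cite[Proposition~3.35]{AnKhr} and used as a black box, so there is no in-paper proof to compare against. Your observation about the $i=0$ case is also well taken: as literally stated (with $i=1,2,\ldots$), the converse needs the additional input that $\delta_0(f(x))$ depends only on $\delta_0(x)$, which you correctly fold into the argument.
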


\begin{thm}
\label{thm:auto-1L}
The automaton function $f_{\mathfrak A(s_0)}\:\Z_p\>\Z_p$ of the automaton
$\mathfrak A(s_0)=\langle\F_p,\Cal S,\F_p,S,O,s_0\rangle$ is 1-Lipschitz.
Conversely, for every 1-Lipschitz function $f\:\Z_p\>\Z_p$ there exists an
automaton 
 $\mathfrak A(s_0)=\langle\F_p,\Cal S,\F_p,S,O,s_0\rangle$ such that  $f=f_{\mathfrak A(s_0)}$.
\end{thm}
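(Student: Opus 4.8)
The plan is to prove both directions of Theorem \ref{thm:auto-1L} by reducing everything to Proposition \ref{prop:compat_coord}, which characterizes 1-Lipschitz maps by the condition that the $i$-th coordinate function $\delta_i(f(x))$ depends only on $\delta_0(x),\ldots,\delta_i(x)$. For the first (direct) statement, I would unfold the definition \eqref{eq:auto-f-def}: the digit $\delta_i(f_{\mathfrak A}(z))=O(\delta_i(z),s_i)$, where $s_i=S(\delta_{i-1}(z),s_{i-1})$ and $s_0$ is the fixed initial state. An easy induction on $i$ then shows that $s_i$ is a function of $\delta_0(z),\ldots,\delta_{i-1}(z)$ only: the base case $i=0$ is immediate ($s_0$ is constant), and the inductive step follows since $s_i$ is determined by $s_{i-1}$ (which by hypothesis depends only on the lower digits) and $\delta_{i-1}(z)$. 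Consequently $\delta_i(f_{\mathfrak A}(z))=O(\delta_i(z),s_i)$ depends only on $\delta_0(z),\ldots,\delta_i(z)$, so Proposition \ref{prop:compat_coord} yields that $f_{\mathfrak A}$ is 1-Lipschitz. (One should also check $f_{\mathfrak A}$ is well-defined as a map $\Z_p\to\Z_p$, but this is automatic: each output digit $\xi_i\in\F_p$ is produced, so the sequence $(\xi_i)$ defines a $p$-adic integer.)

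For the converse, given a 1-Lipschitz $f\colon\Z_p\to\Z_p$, the idea is to build an automaton whose states record exactly the finite amount of information about the input prefix that $f$ needs in order to keep producing output digits. Concretely, I would take $\Cal S$ to be the set of all finite words over $\F_p$ (including the empty word $\Lambda$), set $s_0=\Lambda$, define the state transition function by appending the incoming symbol, $S(\chi,w)=w\circ\chi$ (adjusting for the right-to-left convention), and define the output function $O(\chi,w)$ as follows: if $|w|=i$, then having read digits $\delta_0(z)=w_0,\ldots,\delta_{i-1}(z)=w_{i-1}$ and now reading $\delta_i(z)=\chi$, output $\delta_i(f(z))$ — which by Proposition \ref{prop:compat_coord} is a well-defined function of $w_0,\ldots,w_{i-1},\chi$ alone, hence of the pair $(\chi,w)$. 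By construction, feeding $z$ through this automaton reproduces the digits of $f(z)$, so $f=f_{\mathfrak A(s_0)}$, and every state (every finite word) is reachable from $\Lambda$.

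The one genuine subtlety — and the step I expect to require the most care — is the precise bookkeeping of the right-to-left word convention fixed in the introduction together with the indexing in \eqref{eq:auto-f-def}: one must make sure that when the automaton is at a state encoding the prefix of digits $\delta_0,\ldots,\delta_{i-1}$ and reads $\delta_i$, the output function it applies is the coordinate function $\delta_i(f(\cdot))$ and not $\delta_{i-1}(f(\cdot))$ or $\delta_{i+1}(f(\cdot))$, i.e. that state transition and output are synchronized as in Figure \ref{fig:Transd-sc} ($s_{i+1}=S(\chi_i,s_i)$ and $\xi_i=O(\chi_i,s_i)$). Everything else is a routine induction. I would also remark that the automaton constructed in the converse is in general infinite (its state set is all of $\F_p^{*}$), which is exactly what one expects, since finiteness of the state set is precisely the refinement characterized later by Theorem \ref{thm:fin-auto}; one can of course pass to the reachable part, but here it is already fully reachable.
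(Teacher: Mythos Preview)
Your proposal is correct and follows essentially the same approach as the paper: both directions reduce to Proposition \ref{prop:compat_coord}, and for the converse you build the automaton whose states are finite words with transition by concatenation and output given by the appropriate coordinate function---exactly the paper's construction, except that the paper additionally enumerates words by natural numbers (so $\Cal S=\N_0$) and writes out the verification $f=f_{\mathfrak A(s_0)}$ as a formal induction on word length rather than leaving it as ``by construction.''
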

\begin{proof}
As $s_i=S(\delta_{i-1}(z)),s_{i-1})$ for every $i=1,2\ldots$,  the $i$-th output symbol
$\xi_i=\delta_i(f_{\mathfrak A}(z))$ depends only on input symbols
$\chi_0,\chi_1,\ldots,\chi_i$; that is 
\[
\delta_i(f_{\mathfrak A}(z))=\psi_i(\delta_0(z),\delta_1(z),\ldots,\delta_i(z))
\]
for all $i=0,1,2,\ldots$ and for suitable maps $\psi_i\:\F_p^{i+1}\>\F_p$.
That is,
$f=f_{\mathfrak A(s_0)}\:\Z_p\>\Z_p$ is of the form
\begin{equation}
\label{eq:tri}
f\colon x=\sum_{i=0}^\infty\chi_ip^i\mapsto f(x)=\sum_{i=0}^\infty\psi_i(\chi_0,\ldots,\chi_i)p^i.
\end{equation}
By Proposition \ref{prop:compat_coord} this means that the function $f_{\mathfrak
A(s_0)}$ is 1-Lipschitz.

Conversely, let $f\:\Z_p\>\Z_p$ be a 1-Lipschitz map; the by Proposition \ref{prop:compat_coord} $f$ may be represented in the form \eqref{eq:tri}
for suitable maps $\psi_i\:\F_p^{i+1}\>\F_p$. We now construct an automaton  $\mathfrak A(s_0)=\langle\F_p,\Cal S,\F_p,S,O,s_0\rangle$ such that $f_{\mathfrak A(s_0)}=f$. 

Let  $\F_p^\star$ be a set of all non-empty words over the alphabet $\F_p$.
We consider these words  as base-$p$ expansions of numbers from $\N=\{1,2,3,\ldots\}$
and enumerate all these words  by integers $1,2,3,\ldots$ in lexicographical order in accordance with
the natural order on $\F_p$: $0<1<2<\cdots<p-1$. This way we establish a
one-to-one correspondence between the words $w\in\F_P^\star$ and integers $i\in\N$: $w \leftrightarrow \nu(w)$,  $i\leftrightarrow \omega(i)$ ($\nu(w)\in\N$, $\omega(i)\in\F_p^\star$).
Note that $\nu(\omega(i))=i$, $\omega(\nu(w))=w$ for all $i\in\N$ and all
non-empty words from
$w\in\F_p^\star$.
Assume that $\omega(0)$ is empty word.

Now put $\Cal S=\N_0=\{0,1,2,3,\ldots\}$, the set of all states of the automaton  $\mathfrak A$ under construction, and take  the initial state  $s_0=0$. The state transition function $S$ is defined as follows: 
\begin{equation}
\label{eq:auto-st}
S(r,i)=\nu(r\circ\omega(i)),
\end{equation}
where $i=0,1,2,\ldots$ and  $r\in\F_p$. That is,
$S(r,i)$ is the number of the word  $r\circ\omega(i)$ that is a concatenation
of the word   $\omega(i)$  (the word that has number $i$), the prefix, with the single-letter
word  $r$, the suffix. 

Now consider a one-to-one map 
$\theta_{n}(\chi_{n-1}\cdots\chi_1\chi_0)=(\chi_{0},\chi_1,\ldots,\chi_{n-1})$ from the $n$-letter words onto $\F_p^n$ and 
define the output function of the automaton $\mathfrak A$ as follows:
\begin{equation}
\label{eq:auto-out}
O(r,i)=\psi_{|\omega(i)|}(\theta_{|\omega(i)|+1}(r\circ\omega(i))),
\end{equation}
where $i=0,1,2,\ldots$ and  $r\in\F_p$.

As both $f$ and $f_{\mathfrak A(s_0)}$ are 1-Lipschitz, thus continuous with respect to $p$-adic metric, and as $\N_0$ is dense in $\Z_p$,
to prove that
 $f=f_{\mathfrak A(s_0)}$ is suffices to show that
 \begin{equation}
 \label{eq:auto-1}
 f_{\mathfrak
A(s_0)}(\tilde w)\equiv f(\tilde w)\pmod{p^{|w|)}}
\end{equation}
for all finite non-empty words $w\in\F_p^\star$, where $\tilde w\in\N_0$ stands for a integer whose base-$p$
expansion is $w$.  We prove that \eqref{eq:auto-1} holds for all $w\in\F_p^\star$
once  $|w|=n>0$ by induction on $n$.

If
$n=1$ then $\tilde w\in\F_p$; so once $w$ is feeded to
 $\mathfrak A$, the automaton reaches the state $S(w,0)=\nu(w)$ (cf. \eqref{eq:auto-st})
 and outputs
 $O(w,0)=\psi_0(\theta_1(w))=f(\tilde w)\md p$ (cf. \eqref{eq:auto-out}), see \eqref{eq:tri}.  Thus, \eqref{eq:auto-1}
 holds in this case.

Now assume that \eqref{eq:auto-1} holds for all $w\in\F_p^\star$ such that $|w|=n<k$ and prove that \eqref{eq:auto-1} holds also when $|w|=n=k$.
Represent $w=r\circ v$, where $r\in\F_p$ and $|v|=n-1$. By the induction
hypothesis, after the word   $v$ has been feeded to $\mathfrak A$, the automaton
reaches the state $\nu(v)$ and outputs the word $v_1$ of length $n-1$
such that
$\tilde v_1\equiv f(\tilde v)\md p^{n-1}$. Next, being feeded by the letter $r$, the
automaton (which is in the state $\nu(v)$ now) outputs the letter
$O(r,\nu(v))=\psi_{|\omega(\nu(v))|}(\theta_{|\omega(\nu(v))|+1}(r\circ\omega(\nu(v))))=
\psi_{|v|}(\theta_{|v|+1}(r\circ v))$.
This means that once feeded by $w$, the automaton $\mathfrak A(s_0)$ 
outputs the word
$v_2=(\psi_{|v|}(\theta_{|v|+1}(r\circ v)))\circ v_1$. However, $\tilde v_2\equiv
f(\tilde
w)\pmod{p^n}$, cf. \eqref{eq:tri}.
\end{proof}
\begin{note}
From the proof of Theorem \ref{thm:auto-1L} it is clear that the mapping
\[
f_{n,\mathfrak
A(s_0)}\:\Z/p^n\Z\>\Z/p^n\Z
\]
 is just a reduction  modulo $p^n$ of the automaton function $f_{\mathfrak
A(s_0)}$: $f_{n,\mathfrak
A(s_0)}=f_{\mathfrak
A(s_0)}\md p^n$ for all $n=1,2,3,\ldots$.
\end{note}
We remind that the \emph{reduction map} modulo $p^n$ is the map $\md p^n\:\Z_p\>\Z/p^n\Z$
such that $x\md p^n=\sum_{i=0}^{n-1}\delta_i(x)p^i$; that is, the map $\md
p^n$ just deletes all terms starting with  the $n$-th one in the canonical
$p$-adic expansion of $x$. Note that $\md p^n$ is a continuous ring epimorphism
of $\Z_p$ onto the residue ring $\Z/p^n\Z$. Thus, given a 1-Lipschitz map
$f\:\Z_p\>\Z_p$, the \emph{reduction modulo $p^n$}   
$f\md p^n:\Z/p^n\Z\>\Z/p^n\Z$ such that $(f\md p^n)(x)=f(x)\md p^n$ is well
defined by Proposition \ref{prop:compat_coord}; that is, the so defined map $f\md p^n$
does not depend on the choice of representatives in co-sets
with respect to the ideal $p^n\Z_p$. 

Further, given a 1-Lipschitz function $f\:\Z_p\>\Z_p$ via $\mathfrak A_f$
we denote an initial transducer $\langle\F_p,\Cal S,\F_p,S,O,s_o\rangle$ whose automaton function is $f$; that is, $f_{\mathfrak A_f}=f$. Note that the automaton $\mathfrak A_f$ is \emph{not} unique: There are many automata
that has the same automaton function. However, this non-uniqueness will not
cause misunderstanding since in the paper we are mostly interested with automata
functions rather than with `internal structure' (e.g., with state sets, state
transition and output functions, etc.) of automata themselves.

\section{Van der Put series of 1-Lipschitz functions}
\label{sec:vdp}
Now we remind  
%for the 
definition and some properties of van der Put series, see e.g. \cite{Mah,Sch}
for  details.
Given a continuous 
function 
$f\: \Z_p\rightarrow \Z_p$, 
there exists 
a unique sequence 
$B_0,B_1,B_2, \ldots $ 
of $p$-adic integers such that 

\begin{equation}
\label{vdp}
f(x)=\sum_{m=0}^{\infty}
B_m \chi(m,x) 
\end{equation}
for all $x \in \Z_p$, where 
\begin{displaymath}
\chi(m,x)=\left\{ \begin{array}{cl}
1, &\text{if}
\ \left|x-m \right|_p \leq p^{-n} \\
0, & \text{otherwise}
\end{array} \right.
\end{displaymath}
and $n=1$ if $m=0$; $n$ is uniquely defined by the inequality 
$p^{n-1}\leq m \leq p^n-1$ otherwise. The right side series in \eqref{vdp} is called the \emph{van der Put series} of the function $f$. Note that
the sequence $B_0, B_1,\ldots,B_m,\ldots$ of \emph{van der Put coefficients} of
the function $f$ tends $p$-adically to $0$ as $m\to\infty$, and the series
converges uniformly on $\Z_p$. Vice versa, if a sequence $B_0, B_1,\ldots,B_m,\ldots$
of $p$-adic integers tends $p$-adically to $0$ as $m\to\infty$, then the the series in the right
part of \eqref{vdp} converges uniformly on $\Z_p$ and thus define a continuous
function $f\colon \Z_p\to\Z_p$.

The number $n$ in the definition of $\chi(m,x)$ has a very natural meaning;
it is just the number of digits in a base-$p$ expansion of $m\in\N_0$: 
\[
\left\lfloor  \log_p m \right\rfloor 
=
\left(\text{the number of digits in a base-} p \;\text{expansion for} \;m\right)-1;
\]
therefore $n=\left\lfloor  \log_p m \right\rfloor+1$ for all $m\in\N_0$ (that
is why we assume  $\left\lfloor  \log_p 0 \right\rfloor=0$). 

 Note that 
coefficients $B_m$ are
related to the values of the function $f$ in the following way:
Let 
$m=m_0+ \ldots +m_{n-2} p^{n-2}+m_{n-1} p^{n-1}$ be a base-$p$ expansion
for $m$, i.e., 
 $ m_j\in \left\{0,\ldots ,p-1\right\}$, $j=0,1,\ldots,n-1$ and $m_{n-1}\neq 0$, then
\begin{equation}
\label{eq:vdp-coef}
B_m=
\begin{cases}
f(m)-f(m-m_{n-1} p^{n-1}),\ &\text{if}\ m\geq p;\\
f(m),\ &\text{if otherwise}.
 
\end{cases}
\end{equation}
It worth noticing  also that $\chi (m,x)$ is merely  a characteristic function of the ball $\mathbf B_{p^{-\left\lfloor  \log_p m \right\rfloor-1}}(m)=m+p^{\left\lfloor  \log_p m \right\rfloor-1}\Z_p$
of radius $p^{-\left\lfloor  \log_p m \right\rfloor-1}$ centered at $m\in\N_0$:
\begin{equation}
\label{eq:chi}
\chi(m,x)=\begin{cases}
1,\ &\text{if}\ x\equiv m \pmod{p^{\left\lfloor  \log_p m \right\rfloor+1}};\\
0,\ &\text{if otherwise}
 
\end{cases}
 =
\begin{cases}
1,\ &\text{if}\ x\in\mathbf B_{p^{-\left\lfloor  \log_p m \right\rfloor-1}}(m);\\
0,\ &\text{if otherwise}
 
\end{cases}
\end{equation}

\begin{thm}[Anashin-Khrennikov-Yurova, \cite{AKY-DAN}]
\label{thm:vdp-comp}
The function $f\: \Z_p\rightarrow \Z_p$ is 1-Lipschitz if and only if
it can be represented as
\begin{equation}
\label{eq:vdp-comp}
f(x)=\sum_{m=0}^{\infty}b_m
p^{\left\lfloor \log_p m \right\rfloor} \chi(m,x),
\end{equation}
where $b_m\in \Z_p$ for $m=0,1,2,\ldots$
\end{thm}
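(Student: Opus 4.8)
The plan is to work directly with van der Put coefficients. Since a $1$-Lipschitz map is continuous, it already has a unique expansion $f(x)=\sum_{m=0}^\infty B_m\chi(m,x)$ with $B_m\to 0$, so I would first observe that \eqref{eq:vdp-comp} is merely a reformulation of the divisibility conditions $B_m\equiv 0\pmod{p^{\lfloor\log_p m\rfloor}}$ for all $m\ge 1$ (there is no condition for $m=0$, since $\lfloor\log_p 0\rfloor=0$): if these hold one puts $b_m=B_m p^{-\lfloor\log_p m\rfloor}\in\Z_p$, and conversely any representation of the form \eqref{eq:vdp-comp} forces $B_m=b_m p^{\lfloor\log_p m\rfloor}$ by uniqueness of the van der Put coefficients. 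Thus the theorem reduces to the equivalence between being $1$-Lipschitz and these coefficient divisibilities.

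For the implication ``$1$-Lipschitz $\Rightarrow$ divisibility'', I would fix $m\ge p$, write $n=\lfloor\log_p m\rfloor+1$ for the number of base-$p$ digits of $m$ (so $m_{n-1}\ne 0$), and use \eqref{eq:vdp-coef}: $B_m=f(m)-f\!\left(m-m_{n-1}p^{n-1}\right)$. The two arguments differ by $m_{n-1}p^{n-1}$, and since $1\le m_{n-1}\le p-1$ its $p$-adic absolute value is exactly $p^{-(n-1)}=p^{-\lfloor\log_p m\rfloor}$; the $1$-Lipschitz property then gives $\left|B_m\right|_p\le p^{-\lfloor\log_p m\rfloor}$ immediately. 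For $1\le m\le p-1$ there is nothing to check, since $\lfloor\log_p m\rfloor=0$ and $B_m\in\Z_p$.

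For the converse I would start from a function given by \eqref{eq:vdp-comp}, note that its coefficients $b_m p^{\lfloor\log_p m\rfloor}$ tend to $0$ because $\lfloor\log_p m\rfloor\to\infty$, so $f$ is well defined and continuous, and then show that $x\equiv y\pmod{p^k}$ implies $f(x)\equiv f(y)\pmod{p^k}$ for every $k\ge 1$ (which is equivalent to the $1$-Lipschitz condition). By \eqref{eq:chi}, the $m$-th term of $f(x)-f(y)$ vanishes unless $\chi(m,x)\ne\chi(m,y)$, i.e. unless exactly one of $x,y$ lies in the ball $m+p^{\lfloor\log_p m\rfloor+1}\Z_p$; but if $\lfloor\log_p m\rfloor+1\le k$, then $x\equiv y\pmod{p^k}$ already puts $x$ and $y$ into the same class modulo $p^{\lfloor\log_p m\rfloor+1}$ and hence $\chi(m,x)=\chi(m,y)$. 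So every surviving term has $\lfloor\log_p m\rfloor\ge k$ and therefore lies in $p^k\Z_p$; passing to the limit of the partial sums, all of which lie in the closed set $p^k\Z_p$, gives $f(x)-f(y)\in p^k\Z_p$. Choosing $k=-\log_p\left|x-y\right|_p$ for $x\ne y$ yields $\left|f(x)-f(y)\right|_p\le\left|x-y\right|_p$, as desired.

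The computations are routine; the only delicate point is the ``off-by-one'' bookkeeping in $\chi(m,x)$ --- the ball attached to $m$ has radius exponent $\lfloor\log_p m\rfloor+1$, while the associated power of $p$ in \eqref{eq:vdp-comp} is $p^{\lfloor\log_p m\rfloor}$ --- and it is precisely this mismatch that makes a difference $x-y$ of valuation $k$ interact only with indices $m$ for which $\lfloor\log_p m\rfloor\ge k$, so that the divisibility propagates to $f(x)-f(y)$ with exactly the right exponent. I expect keeping this indexing straight to be the main thing to get right; everything else follows at once from \eqref{eq:vdp-coef} and \eqref{eq:chi}.
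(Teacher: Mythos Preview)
Your argument is correct. Note, however, that the paper does not actually supply its own proof of this theorem: it is quoted with attribution to \cite{AKY-DAN} and used as a tool in Section~\ref{sec:main}, so there is nothing in the present paper to compare your proof against. For what it is worth, your approach---reducing the statement to the divisibility $p^{\lfloor\log_p m\rfloor}\mid B_m$, reading off that divisibility for the forward direction from \eqref{eq:vdp-coef} and the $1$-Lipschitz bound, and for the converse splitting the terms of $f(x)-f(y)$ according to whether $\lfloor\log_p m\rfloor+1\le k$---is the standard one, and your handling of the ``off-by-one'' between the radius exponent $\lfloor\log_p m\rfloor+1$ in $\chi(m,\cdot)$ and the weight $p^{\lfloor\log_p m\rfloor}$ is exactly right.
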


\section{Main theorem}
\label{sec:main}
We first remind some notions and facts from the theory of automata sequences following
\cite{Allouche-Shall}.

An infinite sequence $\mathbf a=(a_i)_{i=0}^\infty$ over a finite alphabet $\Cal A$, $\#\Cal A=L<\infty$, is called \emph{$p$-automatic} if there
exists a finite transducer $\mathfrak T=\langle\F_p,\Cal S,\Cal A,S,O,s_0\rangle$ such
that for all $n=0,1,2,\ldots$, if $\mathfrak T$ is feeded by the word $\chi_k\chi_{k-1}\cdots\chi_0$
which is a base-$p$ expansion of $n=\chi_0+\chi_1p+\cdots\chi_kp^k$, $\chi_k\ne
0$ if $n\ne 0$, then the $k$-th output symbol of $\mathfrak T$ is $a_n$;
or, in other words, such that $\delta_{k}^{\Cal A}(f_\mathfrak T(n))=a_n$ for all $n\in\N_0$,
where $k=\lfloor\log_p n\rfloor$ 
and $\delta_k^{\Cal A}(r)$ stands for the $k$-th digit in the base-$L$ expansion of $r$.

The \emph{$p$-kernel} of the sequence $\mathbf a$ is the set $\ker_p(\mathbf
a)$ of all subsequences $(a_{jp^m+t})_{j=0}^\infty$, $m=0,1,2,\ldots$,
$0\le t<p^m$. 
\begin{thm}[Automaticity criterion, cf.
\protect{\cite[Theorem 6.6.2]{Allouche-Shall}}] 
\label{thm:p-ker}
Let $p\ge
2$; then the sequence $\mathbf a$ is $p$-automatic if and only if its $p$-kernel
is finite.
\end{thm}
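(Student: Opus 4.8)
The plan is to prove both directions by hand, using the finite set $\ker_p(\mathbf a)$ itself as a set of states. I record first the bookkeeping conventions: in this paper a transducer is fed the base‑$p$ digits of its input starting from the least significant one, $\chi_0,\chi_1,\ldots$, and the symbol attached to $n$ is the output $\xi_{\lfloor\log_p n\rfloor}$ emitted while reading the top digit $\chi_{\lfloor\log_p n\rfloor}$ of $n$ (for $n=0$ the input word is the single digit $0$ and $\lfloor\log_p 0\rfloor=0$).

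\emph{Finite $p$-kernel $\Rightarrow$ $p$-automatic.} I would take as state set $\ker_p(\mathbf a)$, with initial state the sequence $\mathbf a=(a_j)_{j=0}^\infty$ itself, state transition function $S(r,\mathbf b)=(b_{pj+r})_{j=0}^\infty$ and output function $O(r,\mathbf b)=b_r$, for $\mathbf b=(b_j)_{j=0}^\infty\in\ker_p(\mathbf a)$ and $r\in\F_p$. The first thing to check is that this is well defined, i.e.\ that $\ker_p(\mathbf a)$ is closed under every decimation map $\mathbf b\mapsto S(r,\mathbf b)$: if $\mathbf b=(a_{jp^m+t})_j$ with $0\le t<p^m$ then $S(r,\mathbf b)=(a_{jp^{m+1}+(rp^m+t)})_j$ with $0\le rp^m+t<p^{m+1}$, again a kernel element. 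Since $\mathbf a=(a_{jp^0+0})_j$ lies in $\ker_p(\mathbf a)$ and $\ker_p(\mathbf a)$ is finite by hypothesis, passing to the subautomaton generated by $\mathbf a$ gives a finite reachable transducer. An easy induction on $m$ then shows that after being fed $\chi_0,\ldots,\chi_{m-1}$, with $t=\sum_{i<m}\chi_ip^i$, the transducer sits in the state $(a_{jp^m+t})_j$; hence on reading the next digit $\chi_m$ it outputs $O(\chi_m,(a_{jp^m+t})_j)=a_{\chi_mp^m+t}$. Applying this with $m=\lfloor\log_p n\rfloor$ and $\chi_m$ the top digit of $n$ shows the $\lfloor\log_p n\rfloor$-th output symbol equals $a_n$, so $\mathbf a$ is $p$-automatic (with output alphabet the value set of $\mathbf a$, which is contained in $\Cal A$).

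\emph{$p$-automatic $\Rightarrow$ finite $p$-kernel.} Let $\mathfrak T=\langle\F_p,\Cal S,\Cal A,S,O,s_0\rangle$ be a finite transducer generating $\mathbf a$, and for $s\in\Cal S$ write $\mathbf a^{(s)}=(a^{(s)}_j)_j$ for the sequence generated by the subautomaton $\mathfrak T(s)$. Fix $\mathbf b=(a_{jp^m+t})_j\in\ker_p(\mathbf a)$ with $0\le t<p^m$, and let $s\in\Cal S$ be the state reached by $\mathfrak T$ after being fed the $m$ low-order base-$p$ digits $\chi_0,\ldots,\chi_{m-1}$ of $t$ (padded by high-order zeros when $t<p^{m-1}$). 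For every $j\ge1$, the base-$p$ word of $jp^m+t$ is precisely $\chi_0,\ldots,\chi_{m-1}$ followed by the digits of $j$, and the observed output position $\lfloor\log_p(jp^m+t)\rfloor=m+\lfloor\log_p j\rfloor$ falls inside the continuation of the run, i.e.\ inside the $\mathfrak T(s)$-computation on $j$; hence $b_j=a^{(s)}_j$ for all $j\ge1$. Therefore each kernel element $\mathbf b$ is determined by the pair $(b_0,s)\in\Cal A\times\Cal S$, namely its $0$-th term together with the state $s$ that fixes the whole tail $(b_j)_{j\ge1}$; consequently $\#\ker_p(\mathbf a)\le\#\Cal A\cdot\#\Cal S<\infty$.

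Neither implication is deep; the only real work is keeping the conventions straight — the least-significant-digit-first reading order, the position $\lfloor\log_p\cdot\rfloor$ of the output symbol that is ``observed'', and the high-order zero digits used to pad a short $t$ up to length $m$. The two steps that deserve to be spelled out are the closure of $\ker_p(\mathbf a)$ under the decimation maps in the first direction, and, in the second, the verification that feeding the (padded) digits of $t$ and then the digits of $j$ reproduces the run of $\mathfrak T$ on $jp^m+t$ with output $a^{(s)}_j$ at position $m+\lfloor\log_p j\rfloor$; describing a kernel element by the pair $(b_0,s)$ rather than trying to realize it exactly as some $\mathbf a^{(s)}$ is what lets one sidestep the usual fuss about leading zeros. (The statement is essentially Theorem~6.6.2 of \cite{Allouche-Shall}, recast for the reading convention used here.)
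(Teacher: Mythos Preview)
The paper does not prove this theorem at all: it is quoted from \cite[Theorem 6.6.2]{Allouche-Shall} and used as a black box in the proof of Theorem~\ref{thm:fin-auto}. So there is no ``paper's own proof'' to compare against.

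Your argument is correct and is the standard proof, carefully adjusted to the paper's least-significant-digit-first convention and to the fact that the relevant output is the one at position $\lfloor\log_p n\rfloor$. Both directions are sound. In the forward direction the closure of $\ker_p(\mathbf a)$ under the decimation maps and the induction on the number of digits read are exactly right. In the backward direction the key observation --- that for $j\ge1$ the base-$p$ word of $jp^m+t$ is the length-$m$ padding of $t$ followed by the digits of $j$, so that $\lfloor\log_p(jp^m+t)\rfloor=m+\lfloor\log_p j\rfloor$ and hence $b_j=a^{(s)}_j$ --- is verified correctly; and your device of recording a kernel element by the pair $(b_0,s)\in\Cal A\times\Cal S$ neatly sidesteps the leading-zero issue at $j=0$, giving the bound $\#\ker_p(\mathbf a)\le\#\Cal A\cdot\#\Cal S$. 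Nothing is missing.
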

Now we are able to state the main result of the paper:
\begin{thm}[Automata finiteness criterion]
\label{thm:fin-auto}
Given a 1-Lipschitz function $f\:\Z_p\>\Z_p$ represented by van der Put series
\eqref{eq:vdp-comp},
$$f(x)=\sum_{m=0}^{\infty}b_m
p^{\left\lfloor \log_p m \right\rfloor} \chi(m,x),$$ 
the function $f$ is the automaton function of a finite automaton
if and only if the following conditions hold simultaneously:
\begin{enumerate}
\item all coefficients $b_m$, $m=0,1,2,\ldots$, constitute a finite subset $B_f\subset\Q\cap\Z_p$,
and
\item the $p$-kernel of the sequence $(b_m)_{m=0}^\infty$ is finite.
\end{enumerate}
\end{thm}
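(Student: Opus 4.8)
The plan is to unwind the connection between the automaton $\mathfrak A_f$ (or rather the family $\Cal F(\mathfrak A_f)$ of its subautomata), the van der Put representation \eqref{eq:vdp-comp}, and the $p$-automaticity of the coefficient sequence $(b_m)$. The central observation I would exploit is that the van der Put coefficient $B_m$, equivalently $b_m p^{\lfloor\log_p m\rfloor}$, is computed by \eqref{eq:vdp-coef} as a \emph{difference} $f(m)-f(m-m_{n-1}p^{n-1})$, and that for a 1-Lipschitz $f$ this difference depends only on the state of $\mathfrak A_f$ reached after the word $\omega(m)$ has been fed in — more precisely, on the subautomaton $\mathfrak A_f(s)$ where $s$ is the state reached after reading the length-$(n-1)$ prefix $m_{n-2}\cdots m_0$ of the base-$p$ expansion of $m$, together with the leading digit $m_{n-1}$. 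So both $b_m$ and, after a shift, the whole `tail sequence' $(b_{mp^k+t})_k$ are functions of a state of the finite (or infinite) automaton. This is the bridge between condition (ii) and finiteness.

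First I would make precise the statement that for a 1-Lipschitz $f$ with van der Put series \eqref{eq:vdp-comp}, the subautomaton $\mathfrak A_f(s)$ reached after reading a word $w$ determines, and is (for a suitably minimized automaton) determined by, the function $x\mapsto (f(w\circ x)-f(w))$ restricted appropriately; this is the $p$-adic analogue of the Myhill–Nerode/minimal-automaton construction, and the number of states of the minimal automaton equals the number of distinct such `derived functions'. I would then show that the derived function attached to $w=\omega(i)$ encodes exactly the subsequence $(b_{\nu(r\circ\omega(i)\circ u)})$ of the coefficient sequence as $u$ ranges over words, i.e.\ up to bookkeeping with the indexing map $\nu$, a shifted-and-decimated piece of $(b_m)$; taking $m$-fold decimations and residues $t$ recovers exactly the elements of $\ker_p((b_m))$. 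Hence the minimal automaton for $f$ has finitely many states iff $(b_m)$ has finite $p$-kernel \emph{and} the alphabet in which the $b_m$ live is finite — the latter because a finite transducer has a finite output alphabet, which forces the $b_m$ to take finitely many values, and those values must be in $\Q\cap\Z_p$ since the outputs of a finite automaton on eventually periodic inputs are eventually periodic, giving rational $p$-adic numbers; this is where condition (i) enters. For the converse, given (i) and (ii), Theorem~\ref{thm:p-ker} turns finiteness of the $p$-kernel into $p$-automaticity of $(b_m)$ (now viewed over the finite alphabet $B_f$), which yields a finite transducer producing the $b_m$; I would then assemble from it, plus a bounded amount of extra finite memory tracking $m_{n-1}$, $\lfloor\log_p m\rfloor \bmod (\text{period})$ and the partial sums needed to reconstruct $f$ from its van der Put series, a finite automaton whose automaton function is $f$ — using crucially that the power $p^{\lfloor\log_p m\rfloor}$ in \eqref{eq:vdp-comp} is exactly what makes the partial sums of the series stabilize digit-by-digit, so only finitely many `active' terms affect each output digit.

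The main obstacle I expect is the bookkeeping in the direction (i)+(ii) $\Rightarrow$ finite: one must show that knowing the $b_m$ via a finite automaton is genuinely enough to reconstruct $f$ with a finite automaton, i.e.\ that reconstituting the sum $\sum b_m p^{\lfloor\log_p m\rfloor}\chi(m,x)$ from the $b_m$-generating transducer costs only finitely much additional state. The delicate point is that, for a fixed prefix of $x$ of length $n$, the terms $\chi(m,x)$ that are nonzero are exactly those $m$ that are prefixes (in the base-$p$, most-significant-last sense) of that initial segment, so there are $n$ of them, with $m$ ranging over all $n$ scales; the contribution of the scale-$k$ term to digit $j$ of $f(x)$ vanishes once $k>j$, so each output digit sees a bounded window — but making this into a genuinely \emph{finite}-state device requires carefully choosing what to remember (essentially a carry, the current leading digit, and the state of the $b_m$-automaton on the relevant $p$-kernel branch), and verifying the rationality in (i) is precisely what guarantees the carries stay in a finite set. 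The reverse direction's subtlety is the clean identification of $\ker_p((b_m))$ with the set of subautomata of $\mathfrak A_f$ modulo the reindexing through $\nu$ and $\omega$; I would isolate that as a lemma.
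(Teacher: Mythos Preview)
Your proposal is correct in outline and shares its starting point with the paper: the family of ``derived'' functions $z\mapsto p^{-k}(f(n+p^kz)-(f(n)\bmod p^k))$ is the Myhill--Nerode state set, and finiteness of the automaton is finiteness of this family. The forward direction (finite $\Rightarrow$ (i)+(ii)) is essentially the same in both arguments.

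Where you diverge is in the converse. You propose to \emph{construct} a finite automaton from the $(b_m)$-generating transducer plus carry bookkeeping, and you correctly flag the carry management as the hard part. The paper avoids construction entirely by computing the van~der~Put series of each derived function $f_{n,k}$ directly: it splits $f_{n,k}(z)=C_{n,k}+D_{n,k}(z)$, where $D_{n,k}(z)=\sum_{t\ge1}b_{n+p^kt}\,p^{\lfloor\log_p t\rfloor}\chi(t,z)$ has as its van~der~Put coefficients exactly the $p$-kernel subsequence $(b_{n+p^kt})_t$, so finiteness of $\{D_{n,k}\}$ is literally finiteness of $\ker_p((b_m))$. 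The constant part is $C_{n,k}=p^{-k}(A(n)-(A(n)\bmod p^k))$ for a fixed $A(n)\in\Z_p$ (independent of $k$), i.e.\ the $k$-th tail of the $p$-adic expansion of $A(n)$; this set is finite iff $A(n)$ is eventually periodic, i.e.\ rational, which unwinds inductively to $b_n\in\Q\cap\Z_p$. Thus both directions are handled symmetrically by one decomposition, with no automaton to build. Your carry-tracking device, if you carried it out, would end up storing precisely $C_{n,k}$ as its carry state and the $D_{n,k}$-determining $p$-kernel index as its other component --- so you would rediscover the paper's decomposition, just packaged as a machine rather than as an equivalence of finiteness conditions. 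The paper's route is shorter and cleaner; yours is more constructive and would actually exhibit the automaton.
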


\begin{note} 
Condition (ii) of the theorem is equivalent to the condition that
the  sequence $(b_m)_{m=0}^\infty$ is $p$-automatic, cf. Theorem \ref{thm:p-ker}.
\end{note}
Now we are going to present equivalent statement of Theorem \ref{thm:fin-auto},
in terms of formal power series.

Given a $q$-element field $\F_q$, denote via $\F_q[[X]]$ the ring of formal
power series in variable $X$ over $\F_q$: 
\[
\F_q[[X]]=\left\{\sum_{i=0}^\infty a_iX^i\: a_i\in\F_q\right\};
\]
denote via $\F_q((X))$ the ring of formal Laurent series over $\F_q$:
\[
\F_q((X))=\left\{\sum_{i=-n_0}^\infty a_iX^i\: n_0\in\N_0, a_i\in\F_q\right\}.
\]
Denote via $F_q(X)$ the field of (univariate) rational functions over $\F_q$:
\[
\F_q(X)=\left\{\frac{u(X)}{v(X)}\: u(X),v(X)\in\F_p[X], u(X)\ne 0\right\},
\]
where $\F_p[X]$ is the ring of polynomials in variable $X$ over $\F_q$. As the field $\F_q((X))$ contains a subfield $\F_q(X)$, it is possible
to define algebraicity over $\F_q(X)$:
A formal Laurent series $F(X)=\sum_{i=-n_0}^\infty a_iX^i$ is  \emph{algebraic}
over $\F_p(X)$ if and only if there exist $d\in\N$ and polynomials 
$u_0(X),\ldots, u_{d}(X)\in\F_p[X]$, not all zero, such that in the field
$\F_q((X))$ the following identity holds:
\[
u_0(X)+u_1(X)\cdot F(X)+\cdots+u_{d}(X)\cdot(F(X))^d=0.
\]

Now we remind Christol's theorem, \cite[Theorem 12.2.5]{Allouche-Shall}:
\begin{thm}[Christol] Let $p$ be a prime, and let  $\mathbf a=(a_i)_{i=0}^\infty$ be infinite sequence over a finite non-empty alphabet $\Cal A$. The sequence
$\mathbf a$ is $p$-automatic if and only if there exists an integer $\ell\in\N$
and an injection $\tau:\Cal
A\>\F_{p^\ell}$ such that the formal power series $\sum_{i=0}^\infty\tau(a_i)X^i$
is algebraic over $\F_{p^\ell}(X)$.
\end{thm}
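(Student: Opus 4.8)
The plan is to deduce Christol's theorem from the automaticity criterion of Theorem~\ref{thm:p-ker} by recasting both of its sides through the $p$ \emph{decimation} (Cartier) operators. Fix $q=p^\ell$ with $q\ge\#\Cal A$, fix an injection $\tau\:\Cal A\>\F_q$, and set $F(X)=\sum_{i=0}^\infty\tau(a_i)X^i\in\F_q[[X]]$. For $r\in\{0,1,\ldots,p-1\}$ let $\Lambda_r$ be the $\F_q$-linear operator $\Lambda_r\bigl(\sum_nc_nX^n\bigr)=\sum_nc_{pn+r}X^n$. Two elementary facts organize everything. First, $F(X)=\sum_{r=0}^{p-1}X^r\,(\Lambda_rF)(X^p)$. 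Second, the generating series of the $p$-kernel subsequence $(a_{p^mj+t})_{j\ge0}$, where $t=r_0+r_1p+\cdots+r_{m-1}p^{m-1}$, is exactly $\Lambda_{r_{m-1}}\cdots\Lambda_{r_0}F$; thus the $p$-kernel of $\mathbf a$ is in bijection with the orbit $\Omega(F)=\{\Lambda_{e_k}\cdots\Lambda_{e_1}F:k\ge0,\ 0\le e_i<p\}$. Since $\F_q$ is finite, a subset of $\F_q[[X]]$ is finite if and only if its $\F_q$-span is finite-dimensional; combining this with Theorem~\ref{thm:p-ker}, I get that $\mathbf a$ is $p$-automatic if and only if $V=\langle\Omega(F)\rangle_{\F_q}$ is finite-dimensional. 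The theorem therefore reduces to: $F$ is algebraic over $\F_q(X)$ if and only if $\dim_{\F_q}V<\infty$.

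For "algebraic $\Rightarrow$ automatic", suppose $F$ satisfies a nonzero relation $\sum_{i=0}^dP_i(X)F^i=0$ with $P_i\in\F_q[X]$ of degree at most $h_0$. I would introduce, for a bound $h\ge h_0$ to be fixed, the class $\Cal G(d,h)$ of all $G\in\F_q[[X]]$ that satisfy \emph{some} nonzero relation $\sum_{i=0}^dQ_i(X)G^i=0$ with $Q_i\in\F_q[X]$, $\deg Q_i\le h$. This class is finite: over the finite field $\F_q$ there are only finitely many admissible tuples $(Q_0,\ldots,Q_d)$, and each nonzero polynomial of $Y$-degree $d$ has at most $d$ roots in $\F_q[[X]]$. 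The core lemma is that $\Lambda_r\bigl(\Cal G(d,h)\bigr)\subseteq\Cal G(d,h)$ for a suitable fixed $h$: applying $\Lambda_r$ to the relation for $G$ and using the decomposition together with the characteristic-$p$ Frobenius, one transforms it into a relation satisfied by $\Lambda_rG$ of the same $Y$-degree $d$ whose coefficient polynomials again have $X$-degree at most $h$, the bound being stable precisely because decimation divides $X$-degrees by about $p$. Granting this, $F\in\Cal G(d,h)$ forces $\Omega(F)\subseteq\Cal G(d,h)$, so $V$ is finite-dimensional and $\mathbf a$ is automatic.

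For the converse, start from the finite-dimensional, $\Lambda_r$-stable $\F_q$-space $V$ with $F\in V$. Enlarging $V$ by its finitely many coefficientwise-Frobenius twists (the coefficientwise Frobenius $\theta\:\sum h_kX^k\mapsto\sum h_k^pX^k$ has order dividing $\ell$) to a $\theta$-stable $\F_q$-space with generators $H_1,\ldots,H_m$, the stability relations $\Lambda_rH_i=\sum_l\gamma_{ril}H_l$ (with $\gamma_{ril}\in\F_q$) fed into $H_i=\sum_rX^r(\Lambda_rH_i)(X^p)$, together with perfectness of $\F_q$, produce a single Frobenius-linear system $\mathbf H=\mathsf M(X)\,\Phi(\mathbf H)$, where $\mathbf H=(H_1,\ldots,H_m)^{T}$, the operator $\Phi$ raises each entry to the $p$-th power, and $\mathsf M\in\mathrm{Mat}_m(\F_q[X])$ has entries of $X$-degree $<p$. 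After passing to an $\F_q(X)$-basis so that $\mathsf M$ becomes invertible over $\F_q(X)$, I obtain $\Phi(\mathbf H)=\mathsf M^{-1}\mathbf H$, whence the finite-dimensional $\F_q(X)$-space $V'=\langle H_1,\ldots,H_m\rangle_{\F_q(X)}$ satisfies $H^p\in V'$ for every $H\in V'$. Since $F\in V'$, the powers $F,F^p,F^{p^2},\ldots$ all lie in the finite-dimensional $V'$ and are consequently $\F_q(X)$-linearly dependent, which is an algebraic relation for $F$ over $\F_q(X)$.

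The step I expect to be the main obstacle is the core lemma of the forward direction, namely that the decimation operators preserve the bounded class $\Cal G(d,h)$. The danger is that, although $\Omega(F)$ is easily trapped in a finite-dimensional $\F_q(X)$-space, such a space is infinite over $\F_q$, so one must genuinely bound both the $Y$-degree and the $X$-degrees of the defining relations uniformly along the whole orbit. This is exactly where the characteristic-$p$ structure does the work: the semilinearity identity $\Lambda_r(H^p\cdot G)=\theta(H)\,\Lambda_r(G)$, the fact that $p$-th powers of polynomials become honest numerators and denominators after decimation, and the contraction of $X$-degrees by the factor $p$ must be combined to show that no iterate of the $\Lambda_r$ can escape the fixed window $(d,h)$. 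The remaining points—translating between $(\Lambda_rF)(X^p)$ and genuine $p$-th powers via the perfectness of $\F_q$, bounding the finitely many $\theta$-twists through $\theta^\ell=\mathrm{id}$, and arranging the invertibility of $\mathsf M$ in the converse—are routine once these finite-dimensional stable structures are in place.
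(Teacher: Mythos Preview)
The paper does not prove Christol's theorem; it merely quotes it from \cite[Theorem 12.2.5]{Allouche-Shall} and uses it as a black box to pass from Theorem~\ref{thm:fin-auto} to Theorem~\ref{thm:fin-auto-e}. So there is no ``paper's own proof'' to compare your proposal against.

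As a standalone argument, your overall architecture via the Cartier operators $\Lambda_r$ and the identification of the $p$-kernel with the $\Lambda$-orbit of $F$ is the standard one, and your converse direction (automatic $\Rightarrow$ algebraic) is essentially correct. The forward direction, however, has a genuine gap at exactly the place you flag as the ``main obstacle'', and the fix you sketch does not work as written. Your core lemma asserts that $\Lambda_r$ preserves the class $\Cal G(d,h)$, and you propose to obtain the new relation for $\Lambda_rG$ by ``applying $\Lambda_r$ to the relation for $G$''. But $\Lambda_r$ is not multiplicative: for a general exponent $i$, the term $\Lambda_r(Q_iG^i)$ mixes \emph{all} the decimations $\Lambda_0G,\ldots,\Lambda_{p-1}G$ (expand $G=\sum_sX^s(\Lambda_sG)(X^p)$ and square or cube it), so what you get is a polynomial relation among the $p$ functions $\Lambda_0G,\ldots,\Lambda_{p-1}G$ jointly, not a degree-$d$ relation for $\Lambda_rG$ alone. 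The semilinearity identity you quote, $\Lambda_r(H^p\cdot G)=\theta(H)\,\Lambda_r(G)$, only helps when the exponents occurring are $p$-th powers.

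The standard remedy, which is the step your sketch is missing, is Ore's theorem: an algebraic $F\in\F_q[[X]]$ over $\F_q(X)$ also satisfies a nonzero \emph{linearized} relation $\sum_{i=0}^{d'}c_i(X)F^{p^i}=0$ with $c_i\in\F_q[X]$. Once the exponents are $p$-th powers, your semilinearity identity applies term by term, and one can indeed trap the $\Lambda$-orbit inside a finite set of the shape you want (bounded $Y$-degree and bounded $X$-degree of coefficients). Without first passing to a linearized relation, the stability of $\Cal G(d,h)$ under $\Lambda_r$ is not established.
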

By Christol's theorem, we now may replace condition (ii) from the statement
of Theorem \ref{thm:fin-auto} by equivalent one, thus getting an equivalent
finiteness criterion:
\begin{thm}[Automata finiteness criterion, equivalent]
\label{thm:fin-auto-e}
Given a 1-Lipschitz function $f\:\Z_p\>\Z_p$ represented by van der Put series
% $$f(x)=\sum_{m=0}^{\infty}b_m
% p^{\left\lfloor \log_p m \right\rfloor} \chi(m,x),$$
\eqref{eq:vdp-comp}, 
the function $f$ is the automaton function of a finite automaton
if and only if the following conditions hold simultaneously:
\begin{enumerate}
\item all coefficients $b_m$, $m=0,1,2,\ldots$, constitute a finite subset $B_f\subset\Q\cap\Z_p$,
and
\item under a suitable injection $\tau\:B_f\to\F_{p^\ell}$, the formal power series
$$
\sum_{m=0}^\infty\tau(b_m)X^m
$$
over $\F_{p^\ell}$ is algebraic over $\F_{p^\ell}(X)$.
\end{enumerate}
\end{thm}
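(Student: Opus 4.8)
First, Theorem~\ref{thm:fin-auto-e} is an immediate consequence of Theorem~\ref{thm:fin-auto}: condition (ii) of the latter says $\ker_p\bigl((b_m)\bigr)$ is finite, which by the automaticity criterion (Theorem~\ref{thm:p-ker}) means $(b_m)$ is $p$-automatic, which by Christol's theorem (with condition (i) making $B_f$ a legitimate finite alphabet) means $\sum_m\tau(b_m)X^m$ is algebraic over $\F_{p^\ell}(X)$ for a suitable $\ell$ and injection $\tau$. So the substantive task is Theorem~\ref{thm:fin-auto}, which I plan to route through the family $\Cal F(\mathfrak A_f)$ of subautomata of $f$. For a non-empty word $w=\chi_{n-1}\cdots\chi_0$ over $\F_p$ with $\tilde w=\chi_0+\cdots+\chi_{n-1}p^{n-1}\in\{0,\dots,p^n-1\}$ (leading zeros allowed) put
\[
f_w(y)=p^{-n}\bigl(f(\tilde w+p^{n}y)-(f(\tilde w)\md p^{n})\bigr),
\]
and $f_w=f$ when $w$ is empty. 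Each $f_w$ is a well-defined $1$-Lipschitz map — it is the automaton function of the subautomaton $\mathfrak A_f(s)$ for the state $s$ reached after $w$ has been fed, hence $1$-Lipschitz by Theorem~\ref{thm:auto-1L} — and since a transducer is reachable one has $\{f_w\}=\{f_{\mathfrak A_f(s)}:s\text{ a state of }\mathfrak A_f\}$, while $\{f_w\}$ is itself the state set of a (possibly infinite) transducer realising $f$. Hence \emph{$f$ is the automaton function of a finite automaton if and only if the set $\{f_w:w\in\F_p^\star\}\cup\{f\}$ is finite.}

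The next step is to read the van der Put series of $f_w$ off that of $f$. Writing $m=\tilde w+p^{n}m'$ with $m'\ge 1$, formula \eqref{eq:chi} gives the scaling identity $\chi(m,\tilde w+p^{n}y)=\chi(m',y)$; on the ball $\tilde w+p^{n}\Z_p$ the terms of \eqref{eq:vdp-comp} with $\lfloor\log_p m\rfloor<n$ telescope to $f(\tilde w)$ while those with $m\not\equiv\tilde w\pmod{p^{n}}$ drop out. Substituting into \eqref{eq:vdp-comp} and dividing by $p^{n}$ gives $f_w(y)=\tau_w+\sum_{m'\ge1}b_{\tilde w+p^{n}m'}\,p^{\lfloor\log_p m'\rfloor}\chi(m',y)$, where $\tau_w:=f_w(0)=p^{-n}\bigl(f(\tilde w)-(f(\tilde w)\md p^{n})\bigr)$, so by \eqref{eq:vdp-coef} the van der Put coefficients $\widehat b_m$ of $f_w$ are $\widehat b_0=\tau_w$, $\widehat b_m=\tau_w+b_{\tilde w+p^{n}m}$ for $1\le m\le p-1$, and $\widehat b_m=b_{\tilde w+p^{n}m}$ for $m\ge p$. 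Thus $f_w\leftrightarrow\bigl(\tau_w,(b_{\tilde w+p^{n}m})_{m\ge1}\bigr)$ is a bijection, and $f$ is a finite-automaton function if and only if simultaneously \emph{(A)} $\{\tau_w\}$ is finite and \emph{(B)} $\bigl\{(b_{\tilde w+p^{n}m})_{m\ge1}:n\ge 0,\ 0\le\tilde w<p^{n}\bigr\}$ is finite. Since (B) is the $p$-kernel of $(b_m)$ with the first entry of each sequence deleted, (B) is — granted (i) — equivalent to (ii).

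For the necessity direction, suppose $\{f_w\}=\{g_1,\dots,g_N\}$, $g_1=f$. Peeling off the lowest input digit yields a transition map $\mu$ on $\{1,\dots,N\}$ and output digits $o_i\colon\F_p\to\F_p$ with $g_i(r+py')=o_i(r)+p\,g_{\mu(i,r)}(y')$; iterating this around a cycle of $i\mapsto\mu(i,0)$ expresses every $g_i(0)$ as $(\text{integer})/(1-p^{t})$, hence places all $g_i(0)$, and then all $g_i(r)$, into $\Q\cap\Z_p$ and into one fixed finite set; while the induced recursion $\widehat b^{(g_i)}_k=\widehat b^{(g_{\mu(i,\,k\,\md\,p)})}_{\lfloor k/p\rfloor}$ for $k\ge p^{2}$ (the $k<p^{2}$ case being similar) exhibits $(b_m)=(\widehat b^{(g_1)}_m)$ as the output of a finite transducer scanning the base-$p$ digits of $m$, i.e.\ $(b_m)$ is $p$-automatic. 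Theorem~\ref{thm:p-ker} now yields (ii), and finiteness and rationality of the $g_i(r)$ yield (i).

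For the sufficiency direction, assume (i) and (ii); (ii) gives (B), so the whole matter reduces to (A), which is the main obstacle: \emph{if $B_f\subset\Q\cap\Z_p$ is finite then $\{\tau_w\}$ is finite}. Each $b\in B_f$ has an ultimately periodic $p$-adic expansion, so, fixing $L$ and $\ell$ uniformly, $b=R_b-p^{L}E_b/(p^{\ell}-1)$ with $R_b,E_b\in\Z$, $0\le R_b<p^{L}$, $0\le E_b<p^{\ell}$, while \eqref{eq:vdp-comp} gives $f(\tilde w)=\sum_{k<n}\gamma_k p^{k}$ with each $\gamma_k\in B_f\cup\{0\}$. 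Substituting, $f(\tilde w)=N/(p^{\ell}-1)$ with $N\in\Z$ and $|N|\le 2p^{\,n+L+\ell}$, whence $\tau_w=p^{-n}\bigl(f(\tilde w)-(f(\tilde w)\md p^{n})\bigr)=M/(p^{\ell}-1)$ with $M\in\Z$ and $|M|$ bounded by a constant depending only on $p,L,\ell$ — the division by $p^{n}$ exactly absorbs the $p$-power growth of the numerator. Hence $\{\tau_w\}$ is finite; equivalently, when one forms $\sum_k\gamma_kp^{k}$ digit by digit the running carry stays in a finite set $C$ of rationals closed under $c\mapsto\bigl(c+\beta-\delta_0(c+\beta)\bigr)/p$, $\beta\in B_f\cup\{0\}$. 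Granting (A), a finite automaton for $f$ is assembled directly: its states are pairs (state of a finite transducer emitting the $p$-automatic sequence $(b_m)$) $\times$ (carry in $C$), plus a flag marking the first digit; on reading a digit it advances the transducer, replaces the carry $c$ by $\bigl(c+\beta-\delta_0(c+\beta)\bigr)/p$ with $\beta\in B_f\cup\{0\}$ the coefficient the transducer emits, and outputs $\delta_0(c+\beta)$; comparison with \eqref{eq:tri} shows its automaton function is $f$. The genuinely delicate point is this boundedness of the carries $\tau_w$; the rest is the scale-$p^{n}$ self-similarity bookkeeping of the second paragraph together with standard automata-sequence facts.
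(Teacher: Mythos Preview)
Your reduction of Theorem~\ref{thm:fin-auto-e} to Theorem~\ref{thm:fin-auto} via Christol is exactly what the paper does, and your attack on Theorem~\ref{thm:fin-auto} follows the same skeleton: introduce the shifted functions $f_w$ (the paper's $f_{n,k}$), observe that finiteness of the automaton is finiteness of $\{f_w\}$, and split each $f_w$ into a constant part $\tau_w=f_w(0)$ (the paper's $C_{n,k}$) and a van~der~Put tail governed by the subsequence $(b_{\tilde w+p^{n}m})_{m\ge1}$ (the paper's $D_{n,k}$). The identification of condition~(B) with the $p$-kernel condition is the same in both.

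The genuine difference is in how the constant part is handled. The paper runs an induction on $\lfloor\log_p n\rfloor$ showing that, for each fixed $n$, the set $\{C_{n,k}:k\}$ is finite iff $A(n)=f(n)\in\Q\cap\Z_p$, and thence iff $b_n\in\Q\cap\Z_p$. This nails necessity, but for sufficiency it only yields \emph{each} $\{C_{n,k}:k\}$ finite, not that the union over all $n$ is finite. Your denominator-bounding argument---writing every $b\in B_f$ as $a_b/(p^{\ell}-1)$, hence $f(\tilde w)=N/(p^{\ell}-1)$ with $|N|=O(p^{n})$, and then observing that division by $p^{n}$ leaves $\tau_w=M/(p^{\ell}-1)$ with $|M|$ bounded independently of $n$---supplies precisely this missing uniformity, and is the cleaner way to close the sufficiency direction.

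Your necessity argument is also organised differently: rather than reading (i)--(ii) off the biconditional, you solve the recursion $g_i(0)=o_i(0)+p\,g_{\mu(i,0)}(0)$ around a cycle to get rationality directly, and exhibit $(b_m)$ as a finite-state function of the base-$p$ digits of $m$. This is correct but somewhat redundant: once you have the bijection $f_w\leftrightarrow(\tau_w,(b_{\tilde w+p^{n}m})_{m\ge1})$, finiteness of $\{f_w\}$ already gives (B) and finiteness of $\{\tau_w\}$, from which (ii) and then (i) follow with less work (note $b_m=f_w(s)-f_w(0)$ for suitable $w,s$). Two small cosmetic points: ``telescope'' is not quite the right word for how the low-index terms collapse to $f(\tilde w)$, and the recursion $\widehat b^{(g_i)}_k=\widehat b^{(g_{\mu(i,k\bmod p)})}_{\lfloor k/p\rfloor}$ needs the offset $g_{\mu(i,r)}(0)$ inserted when $\lfloor k/p\rfloor<p$, as you acknowledge.
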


% In other words, $f$ is an automaton function of a finite automaton if and
% only if there are only finitely many pairwise distinct coefficients $b_i$ in 
% van der Put expansion \eqref{eq:vdp-comp} of $f$, all these coefficients are
% rational, and

\begin{proof}[Proof of Theorem \ref{thm:fin-auto}]
Given a 1-Lipschitz function $f$, for $n\in\N_0$, $k\ge\lfloor\log_pn\rfloor+1$ consider 
functions $f_{n,k}\:\Z_p\>\Z_p$ defined as follows:
$$
f_{n,k}(z)=\frac{1}{p^{k}}\left(f(n+p^{k}z)-(f(n)\bmod p^{k})\right);\ z\in\Z_p.
$$
The function $f$ is an automaton function of a finite automaton if and only if in
the collection $\Cal F$ of functions $f_{n,k}$ ($n\in\N_0$, $k\in\N$, $k\ge\lfloor\log_pn\rfloor+1$) contains only finitely many pairwise
distinct functions: Note that  $f_{n,k}$ is the automaton function that corresponds
to the automaton $\mathfrak A(s(n_k))=\langle\F_p,\Cal S,\F_p,S,O, s(n_k)\rangle$,
where $s(n_k)\in\Cal S$ is the state the automaton $\mathfrak A=\mathfrak A_f=\langle\F_p,\Cal S,\F_p,S,O, s_0\rangle$ reaches after it has been feeded with the input word $n_k$ (of length $p^{k}$) that corresponds to a base-$p$
expansion of $n$ (so the word $n_k$ may contain some leading zeros that correspond
to higher order digits of the expansion). 

Note that by Theorem \ref{thm:vdp-comp}, $b_{n+p^ks}=\frac{1}{p^k}(f(n+p^ks)-f(n))=
\frac{1}{p^k}(f(n+p^ks)-(f(n)\bmod p^k))-\frac{1}{p^k}(f(n)-(f(n)\bmod
p^k))=f_{n,k}(s)-f_{n,k}(0)$
if $n\le p^k-1$ and $s\in\{1,2,\ldots,p-1\}$, cf. \eqref{eq:vdp-coef}; so
the finiteness of $\Cal F$ implies that in the sequence $(b_m)_{m=0}^\infty$
there are only finitely many pairwise distinct terms. We proceed with this in mind.
% We are going to prove that $\Cal
% F$ contains only finitely many pairwise distinct functions if and only if
% in the sequence $(b_m)_{m=0}^\infty$ there only finitely many pairwise distinct
% subsequences $(b_{n+p^kt})_{t=0}^\infty$, where $n\in\N_0$, $k\ge\lfloor\log_pn\rfloor+1$.

 Take $n\in\N_0$ and $k\ge\lfloor\log_pn\rfloor+1$. By \eqref{eq:vdp-comp}, the value $f(n+p^kz)$ can be represented as $f(n+p^kz)=A_{n,k}(z)+B_{n,k}(z)$, where
\begin{align}
A_{n,k}(z)=&\sum_{m=0}^{p^k-1}b_mp^{\lfloor\log_pm\rfloor}\chi(m,n+p^kz);
\label{eq:A}\\
B_{n,k}(z)=&\sum_{m=p^k}^\infty b_mp^{\lfloor\log_pm\rfloor}\chi(m,n+p^kz).
\label{eq:B}
\end{align}
By  \eqref{eq:chi}, once $m\le p^k-1$, the equality $\chi(m,n+p^kz)=0$
holds if
and only if $m\not\equiv n\pmod{p^{\lfloor\log_pm\rfloor+1}}$; and once $m\ge p^k$, the equality $\chi(m,n+p^kz)=0$ holds if and only if $m\not\equiv n\pmod
{p^{\lfloor\log_pm\rfloor+1}}$ (note that $\lfloor\log_pm\rfloor+1\ge k+1$
under conditions of  the latter case);
%$m\not\equiv n\pmod{p^k}$, 
thus
\begin{align}
A_{n,k}(z)=&\sum_{m=0}^{p^k-1}b_mp^{\lfloor\log_pm\rfloor}\chi(m,n);
\label{eq:A1}\\
B_{n,k}(z)=&\sum_{t=1}^\infty b_{n+p^kt}p^{k+\lfloor\log_pt\rfloor}\chi(n+p^kt,n+p^kz)=
\sum_{t=1}^\infty b_{n+p^kt}p^{k+\lfloor\log_pt\rfloor}\chi(t,z).\label{eq:B1}
\end{align} 
From here in particular it follows that $A_{n,k}(z)$ does not depend on $z$ and
that $B_{n,k}(z)\equiv 0\pmod{p^k}$ for all $z\in\Z_p$; consequently,
\begin{equation}
\label{eq:f}
f_{n,k}(z)=\frac{1}{p^{k}}
\left(A_{n,k}(0)+B_{n,k}(z)-(A_{n,k}(0)\bmod p^{k})\right)=
C_{n,k}+D_{n,k}(z),
\end{equation}
where
\begin{align}
\label{eq:C}
C_{n,k}=&\frac{1}{p^{k}}
\left(A_{n,k}(0)-(A_{n,k}(0)\bmod p^{k})\right);\\
D_{n,k}(z)=&\sum_{t=1}^\infty b_{n+p^kt}p^{\lfloor\log_pt\rfloor}\chi(t,z).
\label{eq:D}
\end{align}
Note that from \eqref{eq:D} it follows that $D_{n,k}(0)=0$ (cf. \eqref{eq:chi}), so from \eqref{eq:f}
we deduce  that $f_{n,k}(0)=C_{n,k}$.

%and $r=\lfloor\log_pn\rfloor+1$. 
% As $D_n(0)=0$ by \eqref{eq:D}, from \eqref{eq:f} we derive the following representation of the function $f_n$
% via van der Put series:
% \begin{equation}
% \label{eq:fn}
% f_n(z)=C_n\sum_{m=0}^{p-1}\chi(m,z)+\sum_{t=1}^\infty b_{n+p^kt}p^{\lfloor\log_pt\rfloor}\chi(t,z).
% \end{equation}
Thus, we have obtained the following criterion of the finiteness of the number
of distinct functions in the collection $\Cal F$:
There are only finite number of
pairwise distinct functions $f_{n,k}\in\Cal F$, $n\in\N_0$, $k\ge\lfloor\log_pn\rfloor+1$, if and only if the following two conditions hold simultaneously:
\renewcommand{\theenumi}{\arabic{enumi}}
\renewcommand{\labelenumi}{\theenumi.} 
\begin{enumerate}
\item There are only
finitely many pairwise distinct constants $C_{n,k}\in\Z_p$, $n\in\N_0$, $k\ge\lfloor\log_pn\rfloor+1$.
% and 
\item There are only finitely many pairwise distinct
functions $D_{n,k}\:\Z_p\>\Z_p$, $n\in\N_0$, $k\ge\lfloor\log_pn\rfloor+1$.
\end{enumerate}
 However, since representation \eqref{eq:D}
is a (unique) van der Put expansion of the function $D_{n,k}$,  condition
2 %on the number of
%distinct functions $D_{n,k}$ 
is equivalent to the condition that in the sequence
$(b_n)_{n=0}^\infty$ there are only finitely many pairwise distinct subsequences
$(b_{n+p^kt})_{t=1}^\infty$, where $k\ge\lfloor\log_pn\rfloor+1$, $n\in\N_0$.
In turn, the latter condition is equivalent to the condition  that there are only finitely many pairwise distinct subsequences
$(b_{n+p^kt})_{t=0}^\infty$, $n\in\N_0$, $k\in\N$. Note that if the condition
holds, there are only finitely many pairwise distinct terms $b_m$ in the
sequence $(b_m)_{m=0}^\infty$.

Consider condition 1. Note that  for $k\ge\lfloor\log_pn\rfloor+1$ we
have that
\begin{equation}
\label{eq:short-sum}
\sum_{m=0}^{p^k-1}b_mp^{\lfloor\log_pm\rfloor}\chi(m,n)=
\sum_{m=0}^{p^{\lfloor\log_pn\rfloor+1}-1}b_mp^{\lfloor\log_pm\rfloor}\chi(m,n)
\end{equation}
since $\chi(m,n)=0$ once $\lfloor\log_pm\rfloor>\lfloor\log_pn\rfloor$, cf.
\eqref{eq:chi}. Thus, $A_{n,k}(z)$ does not depend on $k$ (and on $z$ as
we have already shown); so denoting the right hand side in \eqref{eq:short-sum}
via $A(n)$, we have that $C_{n,k}=p^{-k}(A(n)-((A(n))\bmod p^k))$, cf. \eqref{eq:C}
and \eqref{eq:A1}. %=\sum_{m=0}^{p^{\lfloor\log_pn\rfloor+1}-1}b_mp^{\lfloor\log_pm\rfloor}\chi(m,n)$. %, where $N=\lfloor\log_pn\rfloor$.

From here by \eqref{eq:chi} we get that
$C_{n,k}=p^{-k}(b_n-(b_n\bmod p^k))$
once $n$ is such that $\lfloor\log_pn\rfloor=0$. Consequently, 
given $n\in\{0,1,\ldots,p-1\}$,
the finiteness of the number of pairwise distinct $C_{n,k}$, where
$k\ge\lfloor\log_p0\rfloor+1=1$,
is equivalent to the condition that the sequence $(\delta_i(b_n))_{i=0}^\infty$
is eventually periodic, that is, to the condition that $b_n\in\Q\cap\Z_p$. Using
this as a base for induction on $\lfloor\log_pn\rfloor$, assuming that
all $b_n\in\Q\cap\Z_p$ for $n$ such that $\lfloor\log_pn\rfloor< N$, %and
%taking (and fixing) , 
we
see that, given $n$ such that $\lfloor\log_pn\rfloor=N$,
the finiteness of the number
of pairwise distinct $C_{n,k}=p^{-k}(A(n)-((A(n))\bmod p^k))$ for $k\ge\lfloor\log_pn\rfloor+1=N+1$ is equivalent
to the condition that $A(n)\in\Q\cap\Z_p$. However, in view of  \eqref{eq:chi}
from the definition of
$A(n)$ it follows that 
$$
A(n)=b_np^{\lfloor\log_pn\rfloor}+\sum_{m=0}^{p^{\lfloor\log_pn\rfloor}-1}b_mp^{\lfloor\log_pm\rfloor}\chi(m,n).
$$
The left side sum is in $\Q\cap\Z_p$ by induction hypothesis; so $b_np^{\lfloor\log_pn\rfloor}\in\Q\cap\Z_p$
and whence $b_n\in\Q\cap\Z_p$ since $b_n\in\Z_p$. 

We finally have prowed that  conditions 1--2 hold simultaneously if and only
if the following conditions hold simultaneously:
\renewcommand{\labelenumi}{\theenumi$^\prime$.}
\begin{enumerate}
\item All coefficients $b_m$, $m=0,1,2,\ldots$, constitute a non-empty finite subset
(denoted as $B_f$)
in $\Q\cap\Z_p$.
\item There are only finitely many pairwise distinct subsequences
$(b_{n+p^kt})_{t=0}^\infty$, $n\in\N_0$, $k\in\N$, $k\ge\lfloor\log_pn\rfloor+1$;
that is, the $p$-kernel of the sequence $(b_m)_{m=0}^\infty$ is finite.
\end{enumerate}
This ends the proof.
\end{proof}
% Now taking $q=p^\ell$ for $\ell=\lfloor\log_p\rfloor+1$ we conclude that
% conditions 1$^\prime$--2$^\prime$ hold simultaneously if and only if
% conditions (i)--(ii) from the statement of Theorem \ref{thm:fin-auto} hold
% simultaneously due to Cristol theorem, cf.  \cite[Theorem 12.2.5 and Lemma
% 12.2.4]{Allouche-Shall}.

% $\sum_{m=0}^{n+p^k-1}b_mp^{\lfloor\log_pm\rfloor}\chi(m,n)\in\Q\cap\Z_p$ as 
% \begin{multline*}
% C_{n,k}=\frac{1}{p^k}\left(\sum_{m=0}^{n+p^k-1}b_mp^{\lfloor\log_pm\rfloor}\chi(m,n)\right.
% -\\
% \left(
% \left(\left.
% \sum_{m=0}^{n+p^k-1}b_mp^{\lfloor\log_pm\rfloor}\chi(m,n)\right)\bmod p^k\right) \right)
% \end{multline*}
\section{Discussion}
\label{sec:concl}
In the paper, we found finiteness conditions for an automaton in terms of van
der Put series of the automaton function. Any automaton function of a transducer
with $p$-letter input/output alphabets can be considered as a continuous
(with respect to the $p$-adic distance)
map from $p$-adic integers to $p$-adic integers. The van der Put series is
a standard representation for continuous $p$-adic maps. The paper discloses
relations between the $p$-adic theory and the theory of automata sequences.

%%%%%%%%%%%%%%%%%%References%%%%%%%%%%%%%%%%%%%%%%%%%%%%%%%%%

% \bibliographystyle{plain}
% \bibliography{bib/anashin,bib/anash-khr_3}

\begin{thebibliography}{99}

\bibitem{Allouche-Shall}
J.-P. Allouche and J.~Shallit.
\newblock {\em Automatic Sequences. Theory, Applications, Generalizations}.
\newblock Cambridge Univ. Press, 2003.

\bibitem{AnKhr}
V.~Anashin and A.~Khrennikov.
\newblock {\em Applied Algebraic Dynamics}, volume~49 of {\em de Gruyter
  Expositions in Mathematics}.
\newblock Walter~de~Gruyter GmbH \& Co., Berlin---N.Y., 2009.

\bibitem{AKY-DAN}
V.~S. Anashin, A.~Yu. Khrennikov, and E.~I. Yurova.
\newblock Characterization of ergodicity of $p$-adic dynamical systems by using
  van der {P}ut basis.
\newblock {\em Doklady Mathematics}, 83(3):306--308, 2011.

\bibitem{Bra}
W.~Brauer.
\newblock {\em Automatentheorie}.
\newblock B.~G.~Teubner, Stuttgart, 1984.

\bibitem{Grigorch_auto2_eng}
R.~I. Grigorchuk.
\newblock Some topics in the dynamics of group actions on rooted trees.
\newblock {\em Proc. Steklov Institute of Mathematics}, 273:64--175, 2011.

\bibitem{Grigorch_auto}
R.~I. Grigorchuk, V.~V. Nekrashevich, and V.~I. Sushchanskii.
\newblock Automata, dynamical systems, and groups.
\newblock {\em Proc. Steklov Institute Math.}, 231:128--203, 2000.

\bibitem{Hensel}
K.~Hensel.
\newblock \"{U}ber eine neue {B}egr\"undung der {T}heorie der algebraischen
  {Z}ahlen.
\newblock {\em Jahresbericht der Deutschen Mathematiker-Vereinigung},
  6(3):83--88, 1897.

\bibitem{Lunts}
A.~G. Lunts.
\newblock The $p$-adic apparatus in the theory of finite automata.
\newblock {\em Problemy Kibernetiki}, 14:17--30, 1965.
\newblock In Russian.

\bibitem{Mah}
K.~Mahler.
\newblock {\em $p$-adic numbers and their functions}.
\newblock Cambridge Univ. Press, 1981.
\newblock (2nd edition).

\bibitem{Pin_p-adic_auto}
J.-E. Pin.
\newblock Profinite methods in automata theory.
\newblock In {\em Symposium on Theoretical Aspects of Computer Science ---
  STACS 2009}, pages 31--50, Freiburg, 2009.

\bibitem{Sch}
W.~H. Schikhof.
\newblock {\em Ultrametric calculus}.
\newblock Cambridge University Press, 1984.

\bibitem{VL}
V.~S. Vladimirov and I.~V. Volovich.
\newblock Superanalysis 1. {D}ifferential calculus.
\newblock {\em Teoret. Mat. Fiz.}, 59:3--27, 1984.

\bibitem{VL1}
V.~S. Vladimirov and I.~V. Volovich.
\newblock Superanalysis 2. {I}ntegral calculus.
\newblock {\em Teoret. Mat. Fiz.}, 60:169--198, 1984.

\bibitem{Vladimirov/Volovich/Zelenov:1994}
V.~S. Vladimirov, I.~V. Volovich, and E.~I. Zelenov.
\newblock {\em $p$-adic Analysis and Mathematical Physics}.
\newblock Scientific, Singapore, 1994.

\bibitem{Volovich:1987}
I.~V. Volovich.
\newblock $p$-adic string.
\newblock {\em Class. Quant. Grav.}, 4:83--87, 1987.

\bibitem{Vuillem_circ}
J.~Vuillemin.
\newblock On circuits and numbers.
\newblock {\em IEEE Trans. on Computers}, 43(8):868--879, 1994.

\bibitem{Vuillem_fin}
J.~Vuillemin.
\newblock Finite digital synchronous circuits are characterized by 2-algebraic
  truth tables.
\newblock In {\em Advances in computing science - ASIAN 2000}, volume 1961 of
  {\em Lecture Notes in Computer Science}, pages 1--7, 2000.

\bibitem{Vuillem_DigNum}
J.~Vuillemin.
\newblock Digital algebra and circuits.
\newblock In {\em Verification:Theory and Practice}, volume 2772 of {\em
  Lecture Notes in Computer Science}, pages 733--746, 2003.

\end{thebibliography}

\end{document}